\newcommand{\mms}{\mathrm{mms}}
\newcommand{\Res}{\mathrm{Res}}
\newcommand{\Fix}{\mathrm{Fix}}
\theoremstyle{plain}
\newtheorem{corollary}{Corollary}
\newtheorem{lemma}{Lemma}
\newtheorem{proposition}{Proposition}
\newtheorem{theorem}{Theorem}
\theoremstyle{definition}
\newtheorem{definition}{Definition}
\newtheorem{example}{Example}
\newtheorem{remark}{Remark}
\begin{document}

\title{On the minimal memory set of cellular automata}
\author[1]{Alonso Castillo-Ramirez\footnote{Email: alonso.castillor@academicos.udg.mx}}
\author[2]{Eduardo Veliz-Quintero \footnote{Email: eduardo.veliz9236@alumnos.udg.mx}}
\affil[1]{Centro Universitario de Ciencias Exactas e Ingenier\'ias, Universidad de Guadalajara, M\'exico.}
\affil[2]{Centro Universitario de los Valles, Universidad de Guadalajara, M\'exico.}

\maketitle

\begin{abstract}
For a group $G$ and a finite set $A$, a cellular automaton (CA) is a transformation $\tau : A^G \to A^G$ defined via a finite memory set $S \subseteq G$ and a local map $\mu : A^S \to A$. Although memory sets are not unique, every CA admits a unique \emph{minimal memory set}, which consists on all the essential elements of $S$ that affect the behavior of the local map. In this paper, we study the links between the minimal memory set and the \emph{generating patterns} $\mathcal{P}$ of $\mu$; these are the patterns in $A^S$ that are not fixed when the cellular automaton is applied. In particular, we show that when $\vert S \vert \geq 2$ and $\vert \mathcal{P} \vert$ is not a multiple of $\vert A \vert$, then the minimal memory set must be $S$ itself. Moreover, when $\vert \mathcal{P} \vert = \vert A \vert$, $\vert S \vert \geq 3$, and the restriction of $\mu$ to these patterns is well-behaved, then the minimal memory set must be $S$ or $S \setminus \{s\}$, for some $s \in S \setminus \{e\}$. These are some of the first general theoretical results on the minimal memory set of a cellular automaton.   \\

\textbf{Keywords:} Cellular automata; minimal memory set; local map; generating patterns.  
\end{abstract}

\section{Introduction}

Cellular automata (CA) are transformations of a discrete space defined by a fixed local rule that is applied homogeneously and in parallel in the whole space; they have been used in discrete complex systems modeling, and are relevant in several areas of mathematics, such as symbolic dynamics \cite{LM95} and group theory \cite{CAandG}.         

More formally, let $G$ be a group and let $A$ be a finite set. A function $x : G \to A$ is called a \emph{configuration}, and the set of all configurations is denoted by $A^G$. When $S$ is a finite subset of $G$, a function $p : S \to A$ is called a \emph{pattern} (or a \emph{block}) over $S$, and the set of all patterns over $S$ is denoted by $A^S$. A \emph{cellular automaton} is a transformation $\tau : A^G \to A^G$ defined via a finite subset $S \subseteq G$, called a \emph{memory set} of $\tau$, and a \emph{local map} $\mu : A^S \to A$ such that
\[ \tau(x)(g) = \mu( (g^{-1} \cdot x) \vert_S), \quad \forall x \in A^G, g \in G,\]
where $g^{-1} \cdot x \in A^G$ is the \emph{shift} of $x$ by $g^{-1}$ defined by
\[ (g^{-1} \cdot x)(h) := x(gh), \quad \forall h \in G.  \]
Intuitively, applying $\tau$ to a configuration $x \in A^G$ is the same as applying the local map $\mu : A^S \to A$ homogeneously and in parallel using the shift action of $G$ on $A^G$. Local maps $\mu : A^S \to A$ are also known in the literature as \emph{block maps}. In their classical setting, CA are studied when $G = \mathbb{Z}^d$, for $d \geq 1$, and $A=\{0,1\}$ (e.g., see \cite{Kari}). 

Cellular automata do not have a unique memory set. With the above notation, for any finite superset $S^{\prime} \supseteq S$, we may define $\mu^\prime : A^{S^{\prime}} \to A$ by $\mu^\prime(z) := \mu(z \vert_S)$, for all $z \in A^{S^{\prime}}$, and it follows that $\mu^\prime$ is also a local map that defines $\tau$. Hence, any finite superset of a memory set of $\tau$ is also a memory set of $\tau$. However, cellular automata do have a unique \emph{minimal memory set} (MMS), which is the intersection of all the memory sets admitted by $\tau$ \cite[Sec. 1.5]{CAandG}. Equivalently, the minimal memory set of $\tau$ is the memory set of smallest cardinality admitted by $\tau$, and it consists of all the \emph{essential} elements of $G$ required to define a local defining map for $\tau$. 

So far, there are no general theoretical results about the minimal memory set of a cellular automaton. It is known that it does not behave well with composition: although a memory set of a composition of two cellular automata with memory sets $T$ and $S$ is the product $ST$, the minimal memory set of the composition may be a proper subset of $ST$ \cite[Ex. 1.27]{ExCA}. Similarly, there are no nontrivial results about the minimal memory set of the inverse of an invertible cellular automaton. 

Every cellular automaton $\tau : A^G \to A^G$ admits a memory set $S$ such that $e \in S$, where $e$ is the identity of the group $G$. Let $\mu : A^S \to A$ be the corresponding local map that defines $\tau$. The behavior of $\mu$ may be characterized by the set of patterns $\mathcal{P} \subseteq A^S$ and a function $f : \mathcal{P} \to A$ such that 
\[ \mu(p) = f(p) \neq p(e), \ \forall p \in \mathcal{P}, \quad \text{ and } \quad \mu(z) = z(e), \ \forall z \in A^S \setminus \mathcal{P}. \]
In other words, $\mathcal{P}$ is the set of patterns in which $\mu$ does not act as the projection to $e \in S$; this means, that the cellular automaton $\tau$ fixes a configuration $x \in A^G$ if and only if no element of $\mathcal{P}$ appears as a subpattern in $x$. In such a situation, we say that the pair $(\mathcal{P}, f)$ \emph{generates} the local map $\mu : A^S \to A$. We say that the function $f : \mathcal{P} \to A$ is \emph{well-behaved} if 
\[ \forall p,q \in \mathcal{P},  \quad  p(e) = q(e) \ \Leftrightarrow \ f(p) = f(q). \]
When $A = \{ 0,1 \}$, the function $f : \mathcal{P} \to A$ is always well-behaved because $f(p)$ must be equal to the complement of $p(e)$. 

This approach of characterizing CA by patterns has been proved to be useful in the study of various algebraic and dynamical properties. In \cite{CAIdem}, it was shown that CA generated by a single pattern $p \in A^S$ are often idempotent, like in the case when $p$ is a constant or symmetrical pattern. In \cite{Maas,Coven}, various dynamical properties were examined for the so-called \emph{Coven} CA; in our terminology, these are CA generated by two distinct patterns $p, q \in A^S$ such that $p \vert_{S \setminus \{e\}} = q \vert_{S \setminus \{e\}}$. Remarkably, Coven CA were the first nontrivial class of cellular automata in which the exact computation of the topological entropy was obtained (see \cite[p. 1]{Lind}). 

In this paper, we study the connection between the minimal memory set of a cellular automaton and the set of patterns generating its local map. It has been already shown in \cite[Lemma 1]{CAIdem} that if $\mu : A^S \to A$ is generated by only one pattern, then its minimal memory set is $S$ itself. Here, we establish the following main result.

\begin{theorem}
Let $G$ be a group and let $A$ be a finite set with $\vert A \vert \geq 2$. Let $S \subseteq G$ be a finite subset such that $e \in S$, $\vert S \vert \geq 2$, and let $\mu : A^S \to A$ be a local map. Denote by $\mms(\mu)$ the minimal memory set of the cellular automaton defined by $\mu$. Suppose that the pair $(\mathcal{P}, f)$ generates $\mu$. Then:
\begin{enumerate}
\item If $\vert \mathcal{P} \vert \neq \vert A \vert^{\vert S \vert} - \vert A \vert^{\vert S \vert - 1}$, then $e \in \mms(\mu)$. 

\item If $\vert \mathcal{P} \vert$ is not a multiple of $\vert A \vert$, then $\mms(\mu) = S$. 

\item If $\vert S \vert \geq 3$, $f$ is well-behaved and $\vert \mathcal{P} \vert = \vert A \vert$, then $\mms(\mu) = S$ or $\mms(\mu) = S \setminus \{s\}$, for some $s \in S \setminus \{e\}$.
\end{enumerate}
\end{theorem}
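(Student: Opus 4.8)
The plan is to reduce the theorem to a statement about which coordinates of $\mu$ are \emph{essential}. Recall (see \cite[Sec.~1.5]{CAandG} and the discussion in the Introduction) that $\mms(\mu)$ is exactly the set of essential coordinates of $\mu$, i.e.\ the $s \in S$ for which there exist $w, w' \in A^S$ with $w\vert_{S\setminus\{s\}} = w'\vert_{S\setminus\{s\}}$ and $\mu(w)\neq\mu(w')$: every memory set contained in $S$ must contain each such $s$, and deleting the inessential coordinates one at a time yields a memory set. So I would first set up a dictionary translating ``essential'' into conditions on $(\mathcal{P},f)$. Partition $A^S$ into its $\vert A\vert^{\vert S\vert-1}$ \emph{$s$-fibers} (the classes of the relation ``agree on $S\setminus\{s\}$'', each of size $\vert A\vert$); then $s$ is essential iff $\mu$ is non-constant on some $s$-fiber. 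For $s\neq e$, all patterns of an $s$-fiber share their value at $e$, so a short case split (both in $\mathcal{P}$, one in one out, both out) shows that $\mu$ is constant on an $s$-fiber iff that fiber is disjoint from $\mathcal{P}$, or is contained in $\mathcal{P}$ with $f$ constant on it; hence $s$ is essential as soon as some $s$-fiber meets $\mathcal{P}$ without being contained in it. For $s=e$ the same split shows $e$ is essential if some $e$-fiber contains two patterns outside $\mathcal{P}$ (they are sent to their differing values at $e$), or if some $e$-fiber lies inside $\mathcal{P}$ (for then $\mu$ restricted to that fiber is a fixed-point-free self-map of $A$, which cannot be constant since $\vert A\vert\geq 2$).

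For part (1), I count $e$-fibers. If $\vert\mathcal{P}\vert < \vert A\vert^{\vert S\vert} - \vert A\vert^{\vert S\vert-1}$, then $A^S\setminus\mathcal{P}$ has more than $\vert A\vert^{\vert S\vert-1}$ elements, so by pigeonhole it meets some $e$-fiber in at least two patterns; if $\vert\mathcal{P}\vert > \vert A\vert^{\vert S\vert} - \vert A\vert^{\vert S\vert-1}$, then not every $e$-fiber can avoid $\mathcal{P}$ (that would give $\vert\mathcal{P}\vert\leq \vert A\vert^{\vert S\vert-1}(\vert A\vert-1)$), so some $e$-fiber is contained in $\mathcal{P}$. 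By the dictionary, $e$ is essential in both cases, so $e\in\mms(\mu)$.

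For part (2), observe that $\vert A\vert$ divides $\vert A\vert^{\vert S\vert} - \vert A\vert^{\vert S\vert-1} = \vert A\vert^{\vert S\vert-1}(\vert A\vert-1)$ because $\vert S\vert\geq 2$; hence if $\vert A\vert\nmid\vert\mathcal{P}\vert$ then part (1) already gives $e\in\mms(\mu)$, while for $s\in S\setminus\{e\}$, inessentiality of $s$ would make $\mathcal{P}$ a union of $s$-fibers and force $\vert A\vert\mid\vert\mathcal{P}\vert$, a contradiction; so $\mms(\mu)=S$. For part (3), since $\vert S\vert\geq 3$ and $\vert A\vert\geq 2$ one has $\vert\mathcal{P}\vert=\vert A\vert\neq \vert A\vert^{\vert S\vert-1}(\vert A\vert-1)$, so $e$ is essential by part (1). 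Finally, if $s\neq s'$ were two inessential coordinates in $S\setminus\{e\}$, then $\mathcal{P}$ — of size $\vert A\vert$, the size of a single fiber, and nonempty — would be exactly one $s$-fiber and also exactly one $s'$-fiber; its patterns would then agree on $S\setminus\{s\}$ and on $S\setminus\{s'\}$, hence on $(S\setminus\{s\})\cup(S\setminus\{s'\})=S$ since $s\neq s'$, forcing $\vert\mathcal{P}\vert=1<\vert A\vert$. So at most one coordinate of $S\setminus\{e\}$ is inessential, and $\mms(\mu)$ is $S$ or $S\setminus\{s\}$ for some $s\in S\setminus\{e\}$. (Well-behavedness of $f$ enters only to make the condition ``$f$ constant on the fiber'' automatic — all patterns of an $s$-fiber agree at $e$ — so that inessentiality of $s\neq e$ reduces to the fiber condition; the divisibility and intersection arguments do the real work.)

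The main obstacle is the bookkeeping of the first paragraph: pinning down the equivalence $\mms(\mu)=\{\text{essential coordinates}\}$ and carrying out the case analysis cleanly, in particular the observation, for the coordinate $e$, that a fixed-point-free self-map of a set with at least two elements is never constant. Once that dictionary is in place, parts (1)--(3) are short counting and divisibility arguments.
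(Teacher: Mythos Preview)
Your argument is correct, and for parts (1) and (2) it is essentially the paper's proof repackaged: your ``dictionary'' is the content of Proposition~\ref{le-mms} together with Lemmas~\ref{le-1} and~\ref{le-3}, and the pigeonhole/divisibility arguments are Lemma~\ref{cor-1} and Proposition~\ref{le-2}. (One wording glitch: in part (1), ``not every $e$-fiber can avoid $\mathcal{P}$'' should read ``not every $e$-fiber can meet $\mathcal{P}^c$''; your parenthetical bound makes the intended meaning clear.)

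Part (3) is where you diverge from the paper. The paper (Proposition~\ref{prop-last}) splits into two cases according to whether $\mathcal{P}$ is a single $s$-fiber for some $s\neq e$; in Case~1 it invokes well-behavedness via Lemma~\ref{le-1}(2) to conclude $s$ is inessential, then reduces to the single-pattern result (Corollary~\ref{le-one-p}). You instead argue directly that two inessential coordinates $s\neq s'$ in $S\setminus\{e\}$ would force $\mathcal{P}$ to be simultaneously an $s$-fiber and an $s'$-fiber, hence a singleton, contradicting $\vert\mathcal{P}\vert=\vert A\vert\geq 2$. This is shorter and avoids the reduction step. More interestingly, your argument only uses the implication ``$s$ inessential $\Rightarrow$ $\mathcal{P}$ is a union of $s$-fibers'', which (as your own dictionary shows) holds without any hypothesis on $f$; so despite your closing comment, your proof of part (3) does not use well-behavedness at all and establishes a slightly stronger statement than the paper claims. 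The paper's route, by contrast, gives the extra information that when $\mms(\mu)=S\setminus\{s\}$ the reduced local map is generated by a single pattern.
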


In contrast, if $\vert \mathcal{P} \vert = \vert A \vert^{\vert S \vert} - \vert A \vert^{\vert S \vert - 1}$, there are examples in which the minimal memory set of $\mu$ may be any proper subset of $S$, including the empty set with corresponds to the constant cellular automata (see Example \ref{ex-ECA}). As an application, the previous theorem may be used to improve the brute force algorithm that obtains the minimal memory set of a cellular automaton. 

The structure of this paper is as follows. In Section 2, we set up notation, and present some basic properties on the minimal memory set of a cellular automaton. In Section 3, we present results on the links between the generating set of patterns of a local map $\mu : A^S \to A$ and its minimal memory set, including the proof of Theorem 1.

\section{Basic results}

Let $G$ be a group, and let $A$ be a finite set. For the rest of the paper, we shall assume that $\vert A \vert  \geq 2$, and that $\{0,1\} \subseteq A$. The \emph{configuration space} $A^G$ is the set of all functions of the form $x : G \to A$.

\begin{definition}
The \emph{shift action} of $G$ on $A^G$ is a function $\cdot : G \times A^G \to A^G$ defined by 
\[ (g \cdot x)(h) := x(g^{-1}h), \quad \forall x \in A^G, g, h \in G. \]
\end{definition}

The shift action is indeed a group action in the sense that $e \cdot x = x$, for all $x \in A^G$, where $e$ is the identity element of $G$, and $g \cdot (h \cdot x) = gh \cdot x$, for all $x \in A^G$, $g,h \in G$ (see \cite[p. 2]{CAandG}). 

When $G = \mathbb{Z}$, the configuration space $A^\mathbb{Z}$ may be identified with the set of bi-infinite sequences 
\[ x = \dots x_{-2} x_{-1} x_{0} x_{1} x_{2} \dots \]
for all $x \in A^\mathbb{Z}$, where $x_{k} := x(k) \in A$. The shift action of $\mathbb{Z}$ on $A^\mathbb{Z}$ is equivalent to left and right shifts of the bi-infinite sequences. For example,
\[ 1 \cdot x = \dots x_{-3} x_{-2} x_{-1} x_{0} x_{1} \dots  \] 
For any $k \in \mathbb{Z}$, the bi-infinite sequence $k \cdot x \in A^{\mathbb{Z}}$ is centered at $x_{-k}$. 

\begin{definition}[Def. 1.4.1 in \cite{CAandG}]
A \emph{cellular automaton} is a transformation $\tau : A^G \to A^G$ such that there exists a finite subset $S \subseteq G$, called a \emph{memory set} of $\tau$, and a \emph{local map} $\mu : A^S \to A$, such that 
\[ \tau(x)(g) = \mu( (g^{-1} \cdot x) \vert_S ), \quad \forall x \in A^G, g \in G. \]
\end{definition}

A local map $\mu : A^S \to A$ is also known as a \emph{block map}. We say that a cellular automaton $\tau : A^G \to A^G$ \emph{admits} a memory set $S \subseteq G$ if there exists a local map $\mu : A^S \to A$ that defines $\tau$.  

 The famous Curtis-Hedlund-Lyndon Theorem (see \cite[Theorem 1.8.1]{CAandG}) establishes that a function $\tau : A^G \to A^G$ is a cellular automaton if and only if $\tau$ is \emph{$G$-equivariant} in the sense that $\tau(g \cdot x) = g \cdot \tau(x)$, for all $g \in G$, $x \in A^G$, and $\tau$ is continuous in the \emph{prodiscrete topology} of $A^G$ (which is the product topology of the discrete topology of $A$).

\begin{example}\label{Rule110}
Let $G := \mathbb{Z}$ and $S := \{-1,0,1\} \subseteq G$. The relationship between a cellular automaton $\tau : A^G \to A^G$ with local defining map $\mu : A^S \to A$ is described as follows:
\[ \tau( \dots x_{-1} x_{0} x_{1} \dots ) = \dots \mu(x_{-2},x_{-1},x_{0}) \mu(x_{-1},x_0,x_1) \mu(x_{0}, x_1, x_2) \dots \]
In this setting, it is common to define a local map $\mu : A^S \to A$ via a table that enlists all the elements of $A^S$, which are identified with tuples in $A^3$. For example, 
\[ \begin{tabular}{c|cccccccc}
$z\in A^S$ & $111$ & $110$ & $101$ & $100$ & $011$ & $010$ & $001$ & $000$ \\ \hline
$\mu(z) \in A$ & $0$ & $1$ & $1$ & $0$ & $1$ & $1$ & $1$ & $0$
\end{tabular}\] 
When $A=\{0,1\}$, cellular automata that admit a memory set $S = \{ -1,0,1\} \subseteq \mathbb{Z}$ are known as \emph{elementary cellular automata} (ECA) \cite[Sec. 2.5]{Kari}, and they are labeled with a \emph{Wolfram number}, which is the decimal number corresponding to the second row of the defining table of $\mu : A^S \to A$ considered as a binary number. The Wolfram number of the ECA given by the above table is 110, as the binary number of the second row of the table is $1101110$. 
\end{example}

We write $\mu \sim \nu$ if the local maps $\mu : A^S \to A$ and $\nu : A^T \to A$ define the same cellular automaton. This defines an equivalence relation, and it holds that 
\[ \mu \sim \nu \quad \Leftrightarrow \quad \mu( x \vert_S) = \nu( x \vert_T), \ \forall x \in A^G.  \]

If two local maps $\mu : A^S \to A$ and $\nu : A^T \to A$ define the same cellular automaton $\tau : A^G \to A^G$, then the local map $\lambda : A^{S \cap T} \to A$ defined by 
\[ \lambda(z) := \mu(z \vert_S) = \nu(z \vert_T), \quad \forall z \in A^{S \cap T}, \]
also defines $\tau : A^G \to A^G$ (see \cite[Prop. 1.5.1]{CAandG}). This means that the intersection of any two memory sets for $\tau$ is also a memory set for $\tau$. 

\begin{definition}
The \emph{minimal memory set} (MMS) of a cellular automaton $\tau : A^G \to A^G$, denoted by $\mms(\tau)$ is the intersection of all the memory sets admitted by $\tau$. The minimal memory set of a local map $\mu : A^S \to A$, denoted by $\mms(\mu)$, is the minimal memory set of the cellular automaton defined by $\mu$.  
\end{definition}

Clearly, if $\mu \sim \nu$, then $\mms(\mu)=\mms(\nu)$. It is not hard to show that the MMS of a cellular automaton $\tau : A^G \to A^G$ is the memory set of smallest cardinality admitted by $\tau$, and that $\tau$ admits a memory set $S$ if and only if $\mms(\tau) \subseteq S$ (see \cite[Prop. 1.5.2]{CAandG}).

In the sequel, for each $s \in S$, it will be convenient to consider the function $\Res_s : A^S \to A^{S \setminus \{ s\}}$ defined by
\[ \Res_s(z) := z \vert_{S \setminus \{s\} }, \quad \forall z \in A^S.   \]

\begin{definition}
We say that an element $s \in S$ is \emph{essential} for a local map $\mu : A^S \to A$ if there exist $z,w \in A^S$ such that $\Res_s(z) = \Res_s(w)$ but $\mu(z) \neq \mu(w)$. 
\end{definition}

\begin{proposition}[c.f. Exercise 1.24 in \cite{ExCA}]\label{le-mms}
Let $\mu : A^S \to A$ be a local map. Then, 
\[ \mms(\mu) = \{ s \in S : s \text{ is essential for } \mu \}. \]
\end{proposition}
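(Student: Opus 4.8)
The plan is to prove, for each $s\in S$, the equivalence $s\in\mms(\mu)\iff s$ is essential for $\mu$; since $S$ is a memory set we have $\mms(\mu)\subseteq S$, so this equivalence is exactly the asserted description of $\mms(\mu)$. I will reduce the equivalence to a statement about memory sets using two facts already recorded above: first, that any finite superset of a memory set is a memory set and, equivalently, that $\tau$ admits a finite set $U\subseteq G$ as a memory set precisely when $\mms(\tau)\subseteq U$; and second, that $\mu\sim\nu$ implies $\mu(x\vert_S)=\nu(x\vert_T)$ for all $x\in A^G$. From $\mms(\mu)\subseteq S$ it follows that, for $s\in S$, the condition $s\notin\mms(\mu)$ is equivalent to $\mms(\mu)\subseteq S\setminus\{s\}$, hence to $S\setminus\{s\}$ being a memory set of $\tau$. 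It therefore suffices to prove that $S\setminus\{s\}$ is a memory set of $\tau$ if and only if $s$ is not essential for $\mu$.

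For the ``if'' direction, assume $s$ is not essential. By definition $\mu$ takes a constant value on each fibre of the surjection $\Res_s\colon A^S\to A^{S\setminus\{s\}}$, so there is a unique map $\mu'\colon A^{S\setminus\{s\}}\to A$ with $\mu=\mu'\circ\Res_s$. For every $x\in A^G$ we then have $\mu'(x\vert_{S\setminus\{s\}})=\mu'\bigl(\Res_s(x\vert_S)\bigr)=\mu(x\vert_S)$, so $\mu'\sim\mu$ and $S\setminus\{s\}$ is a memory set of $\tau$.

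For the ``only if'' direction, suppose $S\setminus\{s\}$ is a memory set, realized by a local map $\nu\colon A^{S\setminus\{s\}}\to A$ with $\nu\sim\mu$, and let $z,w\in A^S$ satisfy $\Res_s(z)=\Res_s(w)$. Choose any $x\in A^G$ with $x\vert_S=z$, and let $y\in A^G$ be defined by $y\vert_S=w$ and $y\vert_{G\setminus S}=x\vert_{G\setminus S}$. Then $x\vert_{S\setminus\{s\}}=\Res_s(z)=\Res_s(w)=y\vert_{S\setminus\{s\}}$, and hence $\mu(z)=\mu(x\vert_S)=\nu(x\vert_{S\setminus\{s\}})=\nu(y\vert_{S\setminus\{s\}})=\mu(y\vert_S)=\mu(w)$. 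Thus $s$ is not essential for $\mu$, which completes the argument.

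I do not expect a genuine obstacle, as this is a foundational fact; the only points requiring care are the bookkeeping that identifies the three equivalent forms of ``$s\notin\mms(\mu)$'' (which leans on the basic properties of the minimal memory set cited above) and, in the last step, the construction of the two configurations $x$ and $y$ so that they agree everywhere except possibly at the coordinate $s$, ensuring $x\vert_{S\setminus\{s\}}=y\vert_{S\setminus\{s\}}$ while $x\vert_S$ and $y\vert_S$ reproduce the prescribed patterns $z$ and $w$.
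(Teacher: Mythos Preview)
Your proof is correct and follows essentially the same approach as the paper: both directions amount to showing that $s$ is not essential for $\mu$ if and only if $S\setminus\{s\}$ is a memory set, with the forward direction handled by factoring $\mu$ through $\Res_s$ and the backward direction by comparing $\mu$ to a local map on the smaller set. The only cosmetic difference is that the paper works directly with the local map $\mu_0$ on $S_0=\mms(\mu)$ rather than on $S\setminus\{s\}$, but since $S_0\subseteq S\setminus\{s\}$ this is the same argument.
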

\begin{proof}
Let $S_0 := \mms(\mu)$. As $S$ is a memory set for the cellular automaton defined by $\mu$, we must have $S_0 \subseteq S$. Suppose that $s \in S$ is not essential for $\mu$. Define $\mu^\prime : A^{S \setminus \{s\}} \to A$ by $\mu^\prime(y) := \mu(\hat{y})$, for all $y \in A^{S \setminus \{s\}}$, where $\hat{y} \in A^S$ is any extension of $y$. The function $\mu^\prime$ is well-defined because $s$ is not essential for $\mu$, so for all $z,w \in A^S$ with $\Res_s(z) = \Res_s(w)$ we have that $\mu(z) = \mu(w)$. Moreover, $\mu \sim \mu^\prime$, so $S_0 \subseteq S \setminus \{s\}$. Hence, $s \not \in S_0$. 

Conversely, suppose there is $s \in S \setminus S_0$. Let $\mu_0 : A^{S_0} \to A$ be the local map associated with $S_0$ that defines the same cellular automaton as $\mu$. If $s$ is essential for $\mu$, there exist $z,w \in A^S$ such that $\Res_s(z) = \Res_s(w)$ but $\mu(z) \neq \mu(w)$. However, $z \vert_{S \setminus \{s\}} = w \vert_{S \setminus \{s\}}$ implies that $\mu_0( z \vert_{S_0}) = \mu_0(w \vert_{S_0})$, as $s \not \in S_0$. This contradicts that $\mu \sim \mu_0$. Therefore, $s$ is not essential for $\mu$.    
\end{proof}

\begin{example}\label{Rule102}
Let $G := \mathbb{Z}$, $A := \{ 0,1 \}$ and $S := \{-1,0,1\}$. Consider the elementary cellular automaton $\tau : A^\mathbb{Z} \to A^\mathbb{Z}$ defined by the local map $\mu :A^S \to A$ described by the following table: 
\[ \begin{tabular}{c|cccccccc}
$z\in A^S$ & $111$ & $110$ & $101$ & $100$ & $011$ & $010$ & $001$ & $000$ \\ \hline
$\mu(z) \in A$ & $0$ & $1$ & $1$ & $0$ & $0$ & $1$ & $1$ & $0$
\end{tabular}\] 
This has Wolfram number 102. In this case, the element $-1 \in S$ is not essential for $\mu$; this may be deduced by crossing out the coordinate corresponding to $-1$ in the tuples of $A^S$, and observing that there are no contradictions in the images of $x_{0}x_1 \in A^{\{0,1\}}$. With this, we may obtain a reduced table corresponding to a local defining map $\mu^\prime : A^{\{0,1\}} \to A$ for $\tau$:
\[ \begin{tabular}{c|cccc}
$z\in A^{\{0,1 \}}$ & $11$ & $10$ & $01$ & $00$  \\ \hline
$\mu^\prime(z) \in A$ & $0$ & $1$ & $1$ & $0$
\end{tabular}\] 
Now, we may check that $0$ and $1$ are both essential for $\mu^\prime$ (for example, $\Res_0(11)=\Res_0(01)$ and $\mu(11)=0 \neq 1 = \mu(01)$), so
\[ \mms(\mu) = \mms(\mu^{\prime}) = \{0,1\}. \] 
\end{example}

\begin{example}
In the case of the local map $\mu : A^S \to A$ with Wolfram number 110 given by Example \ref{Rule110}, we may check that all the elements of $S=\{-1,0,1\}$ are essential for $\mu$, so $\mms(\mu) = S$.  
\end{example}


\section{Cellular automata generated by patterns} 

For the rest of the paper, assume that $S$ is a finite subset of $G$ such that $e \in S$. 

\begin{definition}\label{CA-pattern}
We say that a pair $(\mathcal{P}, f)$ \emph{generates} a local map $\mu : A^S \to A$ if 
\[ \mathcal{P} := \{ z \in A^S : \mu(z) \neq z(e) \},  \]
and $f : \mathcal{P} \to A$ is the restriction of $\mu$ to $\mathcal{P}$. 
\end{definition}

In other words, $\mathcal{P}$ is the set of patterns on which $\mu : A^S \to A$ does not act as the projection to $e$. Observe that if $(\mathcal{P}, f)$ generates $\mu$, then
\[ \mu(z) = \begin{cases}
f(z) & \text{ if } z \in \mathcal{P} \\
z(e) & \text{ if } z \not \in \mathcal{P}
\end{cases},  \quad \forall z \in A^S. \]

Moreover, $\mu : A^S \to A$ is equal to the projection to $e$ (which means that the cellular automaton defined by $\mu$ is the identity function) if and only if $\mathcal{P} = \emptyset$. It follows by definition that every local map $ \mu :A^S \to A$ has a unique generating pair $(\mathcal{P},f)$. 

\begin{remark}\label{remark1}
If $A= \{0,1\}$, for any $\mathcal{P} \subseteq A^S$, there is a unique choice for the function $f : \mathcal{P} \to A$ because of the condition that $f(p) =\mu(p) \neq p(e)$, for all $p \in \mathcal{P}$. Explicitly, $f$ must be defined by $f(p):= p(e)^c$, where $p(e)^c$ denotes the complement of $p(e)$. Hence, in this situation, we simply say that the set of patterns $\mathcal{P} \subseteq A^S$ generates $\mu : A^S \to A$. 
\end{remark}

\begin{example}
Let $G := \mathbb{Z}$, $S := \{-1,0,1\}$ and $A:=\{0,1\}$.
\begin{enumerate}
\item Let $\mu : A^S \to A$ be the local map with Wolfram number 110 given by Example \ref{Rule110}. Then $\mu$ is generated by the set of patterns $\mathcal{P} = \{111, 101, 001\}$. 
\item Let $\mu : A^S \to A$ and $\mu^\prime : A^{\{0,1\}} \to A$ be the local maps with Wolfram number 102 given by Example \ref{Rule102}. Then $\mu$ is generated by the set of patterns $\{ 111, 101, 011, 001 \}$ and $\mu^\prime$ is generated by the set of patterns $\{ 11, 01\}$.
 \end{enumerate}
\end{example}

A fundamental object in symbolic dynamics is a \emph{subshift}, which may be defined as a closed (in the prodiscrete topology) \emph{$G$-equivariant} subset $X$ of $A^G$, in the sense that $g \cdot x \in X$ for all $g \in G$, $x \in X$. Equivalently, any subshift $X \subseteq A^G$ may be defined via a (possibly infinite) set of \emph{forbidden patterns} (see Ex. 1.39 and 1.47 in \cite{ExCA}). In our setting, for $\mathcal{P} \subseteq A^S$, we shall consider the subshift $X_{\mathcal{P}} \subseteq A^G$ defined by the forbidden patterns $\mathcal{P}$:
\[  X_{\mathcal{P}} := \{ x \in A^G : (g \cdot x)\vert_S \not\in \mathcal{P}, \forall g \in G \}. \]
This is a \emph{subshift of finite type}, since the set of forbidden patterns is a finite set. 

\begin{lemma}[c.f. Exercise 1.61 in \cite{ExCA}]
Let $\tau : A^G \to A^G$ be a cellular automaton with memory set $S \subseteq G$, with $e \in S$, and local defining map $\mu : A^S \to A$. If $(\mathcal{P},f)$ generates $\mu$, then
\[ \Fix(\tau) := \{ x \in A^G : \tau(x) = x \} = X_{\mathcal{P}}. \] 
\end{lemma}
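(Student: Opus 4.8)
The plan is to prove the set equality $\Fix(\tau) = X_{\mathcal{P}}$ by a direct double inclusion, unwinding both definitions pointwise. The key observation is that, for any configuration $x \in A^G$ and any $g \in G$, the value of the updated configuration at $g$ is $\tau(x)(g) = \mu\big((g^{-1} \cdot x)\vert_S\big)$, while the value of the original configuration at $g$ is $x(g) = (g^{-1} \cdot x)(e) = \big((g^{-1} \cdot x)\vert_S\big)(e)$, using that $e \in S$. So if we abbreviate $z_g := (g^{-1} \cdot x)\vert_S \in A^S$, the condition ``$\tau(x)(g) = x(g)$'' is precisely ``$\mu(z_g) = z_g(e)$'', which by the definition of the generating pair $(\mathcal{P}, f)$ is equivalent to ``$z_g \notin \mathcal{P}$''.

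First I would record this pointwise equivalence, paying attention to the sign conventions: note that the excerpt defines the shift action with $(g \cdot x)(h) = x(g^{-1}h)$, so $(g^{-1} \cdot x)(h) = x(gh)$, and in particular $(g^{-1} \cdot x)(e) = x(g)$, confirming $z_g(e) = x(g)$. Then $x \in \Fix(\tau)$ means $\tau(x)(g) = x(g)$ for all $g \in G$, i.e. $\mu(z_g) = z_g(e)$ for all $g$, i.e. $z_g \notin \mathcal{P}$ for all $g$, i.e. $(g^{-1} \cdot x)\vert_S \notin \mathcal{P}$ for all $g \in G$. Since $g \mapsto g^{-1}$ is a bijection of $G$, this last condition is the same as $(g \cdot x)\vert_S \notin \mathcal{P}$ for all $g \in G$, which is exactly the membership condition defining $X_{\mathcal{P}}$. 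This chain of ``if and only if'' statements gives both inclusions simultaneously.

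I do not anticipate a serious obstacle here; the statement is essentially a bookkeeping exercise. The one place to be careful — and the only thing I would double-check explicitly in the write-up — is the reindexing step where the universal quantifier over $g^{-1}$ is replaced by a universal quantifier over $g$; this is harmless because inversion is a bijection on $G$, but it is worth stating so the reader sees that the sign discrepancy between the formula for $\tau$ (which uses $g^{-1}$) and the formula for $X_{\mathcal{P}}$ (which uses $g$) causes no trouble. I would conclude by noting that, since every step was an equivalence, we have shown $x \in \Fix(\tau) \iff x \in X_{\mathcal{P}}$, hence the two sets coincide.
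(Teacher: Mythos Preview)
Your argument is correct and is essentially the same as the paper's: both unwind the definitions to the pointwise equivalence $\tau(x)(g) = x(g) \iff (g^{-1}\cdot x)\vert_S \notin \mathcal{P}$ and then quantify over $g$. If anything, your treatment of the $g \leftrightarrow g^{-1}$ reindexing is more careful than the paper's, which silently identifies the two quantifiers.
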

\begin{proof}
Let $x \in X_{\mathcal{P}}$. Then, $(g \cdot x)\vert_S \not\in \mathcal{P}$, for all $g \in G$, so it follows that
\[ \tau(x)(g) = \mu( (g^{-1} \cdot x) \vert_S) = (g^{-1} \cdot x)(e) = x(g), \quad \forall g \in G.   \]
Therefore, $x \in \Fix(\tau)$. Conversely, suppose that $x \not \in  X_{\mathcal{P}}$, so there exists $g \in G$ such that $(g \cdot x)\vert_S \in \mathcal{P}$. Then,
\[ \tau(x)(g) = \mu( (g^{-1} \cdot x) \vert_S) = f( g^{-1} \cdot x) \neq ( g^{-1} \cdot x)(e) = x(g).  \]
This shows that $\tau(x) \neq x$, so $x \not \in \Fix(\tau)$. 
\end{proof}

We now turn our attention to the minimal memory set of a local map $\mu : A^S \to A$ generated by a pair $(\mathcal{P},f)$. In the following example, we examine in detail the links between the generating patterns and minimal memory sets for elementary cellular automata. 

\begin{example}\label{ex-ECA}
Let $A:=\{0,1\}$ and $S: = \{-1,0,1\} \subseteq \mathbb{Z}$. Table \ref{mms}, which was obtained by direct computations, shows the sizes of the minimal memory sets of local maps $\mu : A^S \to A$ according to the sizes of their generating set of patterns $\mathcal{P} \subseteq A^S$.
\begin{table}[!h]\centering
\setlength\tabcolsep{.7em}
\caption{Generating patterns and minimal memory set of ECA.}\label{mms}
\begin{tabular}{|c|c|c|c|c|c|c|c|c|c|}\hline
 $\vert \mathcal{P} \vert$  &  $0$  &  $1$  &  $2$  &   $3$  &  $4$  &  $5$  &  $6$  &  $7$  &   $8$  \\ \hline
 \text{No. local maps}  &  $1$  &   $8$  &  $28$   &  $56$  & $70$  &  $56$  &  $28$  &  $8$  &   $1$  \\ \hline
  \text{Size of MMS}   &  $1$  &   $3$  &  $3 \text{ or } 2$  &  $3$  &  $0,1,2, \text{ or } 3  $   &  $3$  &  $3 \text{ or } 2 $  &  $3$ & $1$  \\ \hline 
\end{tabular}
\end{table}

Since $\vert A^S \vert = 8$, there are $\binom{8}{k}$ local maps generated by $k =: \vert \mathcal{P} \vert$ different patterns. When $k=0$, cellular automaton must be the identity, while when $k=8$, the cellular automaton must be the rule that exchanges $0$'s and $1$'s (ECA 51). The richest variety of minimal memory sets appears when $k=4$, including the constant cellular automata (ECA 0 and 255) whose minimal memory set is the empty set $\emptyset$. The symmetry that appears in the possible sizes of MMS in Table \ref{mms} may be explained by Proposition \ref{le-complement} (2). 
\end{example}

The minimal memory set of a local map $\mu : A^S \to A$ may be more easily characterized in terms of a generating pair $(\mathcal{P}, f)$ when the function $f : \mathcal{P} \to A$ only depends on the projection to $e$ and acts as a permutation of $A$. Hence, we introduce the following definition.

\begin{definition}
Suppose that $(\mathcal{P}, f)$ generates a local map $\mu : A^S \to A$. We say that $f : \mathcal{P} \to A$ is \emph{well-behaved} if, for all $p,q \in \mathcal{P}$, $p(e) = q(e)$ if and only if $f(p) = f(q)$.
\end{definition}

\begin{remark}
If $A = \{ 0,1\}$, then $f : \mathcal{P} \to A $ is always well-behaved because, as explained in Remark \ref{remark1}, $f(p) = p(e)^c$. Clearly, $p(e) = q(e)$ if and only if $p(e)^c = q(e)^c$. 
\end{remark}

In the following lemmas, we shall analyze when $s \in S$ is essential for a local map $\mu : A^S \to A$ generated by a pair $(\mathcal{P},f)$. The two distinctive cases will be when $s \neq e$ and $s = e$.  

Recall that for $s \in S$, we define $\Res_s : A^S \to A^{S \setminus \{ s\}}$ by $\Res_s(z) = z \vert_{S \setminus \{s \}}$. For $\mathcal{P} \subseteq A^S$, denote $\mathcal{P}^c := A^S \setminus \mathcal{P}$. 

\begin{lemma}\label{le-1}
Suppose that $(\mathcal{P},f)$ generates a local map $\mu : A^S \to A$. Let $s \in S \setminus \{e\}$. 
\begin{enumerate}
\item If there exist $p \in \mathcal{P}$ and $z \in \mathcal{P}^c$ such that $\Res_s(p) = \Res_s(z)$, then $s$ is essential for $\mu$.

\item If $s$ is essential for $\mu$ and $f$ is well-behaved, then there exist $p \in \mathcal{P}$ and $z \in \mathcal{P}^c$ such that $\Res_s(p) = \Res_s(z)$.
\end{enumerate}
\end{lemma}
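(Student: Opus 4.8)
The plan is to argue both parts directly from the definitions, using the explicit description $\mu(z) = f(z)$ if $z \in \mathcal{P}$ and $\mu(z) = z(e)$ if $z \in \mathcal{P}^c$, together with the key observation that when $s \neq e$, restricting along $\Res_s$ does not change the value at $e$; that is, if $\Res_s(z) = \Res_s(w)$ then $z(e) = w(e)$.

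For part (1), I would take $p \in \mathcal{P}$ and $z \in \mathcal{P}^c$ with $\Res_s(p) = \Res_s(z)$ and show $\mu(p) \neq \mu(z)$, which immediately gives that $s$ is essential. Indeed $\mu(z) = z(e) = p(e)$ since $s \neq e$ forces $z(e) = p(e)$, while $\mu(p) = f(p) \neq p(e)$ by the defining property of $\mathcal{P}$. Hence $\mu(p) \neq \mu(z)$, as desired. This direction does not use well-behavedness.

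For part (2), suppose $s$ is essential, so there exist $z, w \in A^S$ with $\Res_s(z) = \Res_s(w)$ but $\mu(z) \neq \mu(w)$. I would first rule out the case $z, w \in \mathcal{P}^c$: then $\mu(z) = z(e) = w(e) = \mu(w)$, a contradiction. Next I would rule out the case $z, w \in \mathcal{P}$: here $z(e) = w(e)$ (again because $s \neq e$), so well-behavedness gives $f(z) = f(w)$, hence $\mu(z) = \mu(w)$, again a contradiction. Therefore exactly one of $z, w$ lies in $\mathcal{P}$ and the other in $\mathcal{P}^c$; relabeling, we get $p \in \mathcal{P}$ and $z' \in \mathcal{P}^c$ with $\Res_s(p) = \Res_s(z')$, which is the conclusion.

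The argument is short and the main thing to be careful about is the systematic case split in part (2) and the repeated use of the fact that $\Res_s$ preserves the $e$-coordinate precisely because $s \neq e$ — this is exactly where the hypothesis $s \in S \setminus \{e\}$ enters, and it is what makes the case $s = e$ (to be handled in the next lemma) genuinely different. The only place well-behavedness is needed is to eliminate the subcase $z, w \in \mathcal{P}$ in part (2); without it, two patterns in $\mathcal{P}$ agreeing off $s$ could still have $f(z) \neq f(w)$, so the implication would fail.
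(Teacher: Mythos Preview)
Your proof is correct and matches the paper's argument essentially verbatim for part (1); for part (2) the paper argues the contrapositive (assuming no such $p,z$ exist and showing $s$ is not essential via the same two-case split on membership in $\mathcal{P}$), which is logically the same case analysis you carry out directly. No gaps.
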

\begin{proof}
For point (1), observe that $z(e) = p(e)$ because $\Res_s(p) = \Res_s(z)$ and $s \neq e$. Then,
\[ \mu(z) = z(e) = p(e) \neq f(p) = \mu(p).   \]
It follows that $s$ is essential for $\mu$.  

For point (2), suppose that for all $p \in \mathcal{P}$ and $z \in A^S$ such that $\Res_s(p) = \Res_s(z)$, we have $z \in \mathcal{P}$. Take arbitrary $z_1, z_2 \in A^S$ such that $\Res_s(z_1) = \Res_s(z_2)$. We have two cases:
\begin{itemize}
\item \textbf{Case $z_1 \in \mathcal{P}$}: By assumption, we must have $z_2 \in \mathcal{P}$. As $f$ is well-behaved and $z_1(e) = z_2(e)$, then $f(z_1) = f(z_2)$. Then, 
\[ \mu(z_1) = f(z_1) = f(z_2) = \mu(z_2). \]

\item \textbf{Case $z_1 \not\in \mathcal{P}$}: By assumption, we must have $z_2 \not \in \mathcal{P}$. Then,
\[ \mu(z_1) = z_1(e) = z_2(e) = \mu(z_2).  \] 
\end{itemize}
This contradicts that $s$ is essential for $\mu$. 
\end{proof}

\begin{proposition}\label{le-2}
Suppose that $(\mathcal{P},f)$ generates a local map $\mu : A^S \to A$ and that $\vert \mathcal{P} \vert$ is not a multiple of $\vert A \vert$. Then every $s \in S \setminus \{e\}$ is essential for $\mu$.
\end{proposition}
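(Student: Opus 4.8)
The plan is to prove the contrapositive: if some $s \in S \setminus \{e\}$ is \emph{not} essential for $\mu$, then $\vert \mathcal{P} \vert$ is a multiple of $\vert A \vert$. So fix $s \in S \setminus \{e\}$ that is not essential. By Lemma \ref{le-1}(1) (taken in contrapositive form), there do not exist $p \in \mathcal{P}$ and $z \in \mathcal{P}^c$ with $\Res_s(p) = \Res_s(z)$; equivalently, for every $w \in A^{S \setminus \{s\}}$, the fiber $\Res_s^{-1}(w) \subseteq A^S$ is entirely contained in $\mathcal{P}$ or entirely contained in $\mathcal{P}^c$. (Note this direction does \emph{not} require $f$ to be well-behaved, which is why the hypothesis of Proposition \ref{le-2} is only the divisibility condition.)

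The key observation is then that each fiber $\Res_s^{-1}(w)$ has exactly $\vert A \vert$ elements, since it consists of all extensions of $w : S \setminus \{s\} \to A$ to $S$ obtained by choosing the value at $s$ freely in $A$. Because $\mathcal{P}$ is a disjoint union of some of these fibers — precisely, $\mathcal{P} = \bigsqcup_{w \in W} \Res_s^{-1}(w)$ where $W := \{ w \in A^{S \setminus \{s\}} : \Res_s^{-1}(w) \subseteq \mathcal{P}\}$ — we get $\vert \mathcal{P} \vert = \vert W \vert \cdot \vert A \vert$, which is a multiple of $\vert A \vert$. This contradicts the hypothesis, so every $s \in S \setminus \{e\}$ must be essential.

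I would write this up in two or three sentences: invoke Lemma \ref{le-1}(1) to get the fiber-partition property, count the fibers, and conclude. There is no real obstacle here — the only thing to be careful about is making explicit that $s \neq e$ is used so that $\Res_s^{-1}(w)$ genuinely has size $\vert A \vert$ (all values at $s$ are allowed, none being constrained by the "projection to $e$" role), and that the argument uses only part (1) of Lemma \ref{le-1}, so no well-behavedness assumption is needed. One could alternatively phrase it as: $\mathcal{P}$ is a union of cosets of the equivalence relation "agree off $s$", each class having size $\vert A \vert$; but the fiber formulation is cleanest.
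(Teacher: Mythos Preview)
Your proof is correct and is essentially identical to the paper's: both invoke Lemma~\ref{le-1}(1) in contrapositive form to conclude that $\mathcal{P}$ is a disjoint union of full $\Res_s$-fibers, each of size $\vert A\vert$, forcing $\vert A\vert$ to divide $\vert \mathcal{P}\vert$. One small remark on your commentary: the condition $s\neq e$ is not what guarantees the fiber has size $\vert A\vert$ (that holds for any $s$); rather, $s\neq e$ is used inside the proof of Lemma~\ref{le-1}(1) to ensure $p(e)=z(e)$.
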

\begin{proof}
Fix $s \in S \setminus \{e\}$. By Lemma \ref{le-1} (1), it is enough if we show that there exist $p \in \mathcal{P}$ and $z \in \mathcal{P}^c$ such that $\Res_s(p) = \Res_s(z)$. If no such pair exists, it means that $\mathcal{P}$ may be written as a partition of preimages under $\Res_s : A^S \to A^{S \setminus \{ s\}}$:
\[ \mathcal{P} = \bigsqcup_{i=1}^n \Res_s^{-1}(y_i),  \]
for some $y_i \in A^{S \setminus \{ s\}}$, $i \in \{1,2 \dots, n\}$. However, 
\[ \vert \Res_s^{-1}(y_i) \vert = \vert A \vert, \quad \forall i \in \{1,2 \dots, n\}.  \]
Hence, $\vert \mathcal{P} \vert = n \vert A \vert$ is a multiple of $\vert A \vert$, which contradicts the hypothesis. 
\end{proof}

Now we shall try to determine when $e \in S$ is essential for a local map $\mu : A^S \to A$ generated by $(\mathcal{P},f)$. 

\begin{lemma}\label{le-3}
Suppose that $(\mathcal{P},f)$ generates a local map $\mu : A^S \to A$.
\begin{enumerate}
 \item If there exist $z, w \in \mathcal{P}^c$ such that $z \neq w$ and $\Res_e(z) =\Res_e(w)$, then $e \in S$ is essential for $\mu$. 
 \item Suppose that $f$ is well-behaved and that there exist $p, q \in \mathcal{P}$ such that $p \neq q$ and $\Res_e(p) =\Res_e(q)$. Then, $e \in S$ is essential for $\mu$. 
 \end{enumerate}
\end{lemma}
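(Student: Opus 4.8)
The plan is to exhibit, in each case, an explicit pair of patterns that witnesses $e$ being essential for $\mu$, i.e.\ two patterns agreeing off $e$ whose images under $\mu$ differ. The key idea in both parts is that $e$ is essential precisely when there exist $u, v \in A^S$ with $\Res_e(u) = \Res_e(v)$ but $\mu(u) \neq \mu(v)$, so it suffices to build such $u, v$ from the data we are given.

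For part (1), I would start from the two distinct patterns $z, w \in \mathcal{P}^c$ with $\Res_e(z) = \Res_e(w)$. Since $z \neq w$ but they agree on $S \setminus \{e\}$, we must have $z(e) \neq w(e)$. Because $z, w \notin \mathcal{P}$, the defining property of the generating pair gives $\mu(z) = z(e)$ and $\mu(w) = w(e)$, and these are distinct. So $z$ and $w$ themselves form the required witnessing pair, and $e$ is essential for $\mu$. This case does not even use that $f$ is well-behaved.

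For part (2), I would take the two distinct $p, q \in \mathcal{P}$ with $\Res_e(p) = \Res_e(q)$, so again $p(e) \neq q(e)$. Now $\mu(p) = f(p)$ and $\mu(q) = f(q)$. Since $f$ is well-behaved and $p(e) \neq q(e)$, the contrapositive of the well-behaved condition forces $f(p) \neq f(q)$; hence $\mu(p) \neq \mu(q)$ while $\Res_e(p) = \Res_e(q)$, so $e$ is essential for $\mu$. The only subtle point here is the direction of the well-behaved implication being used: we need ``$p(e) \neq q(e) \Rightarrow f(p) \neq f(q)$'', which is exactly the contrapositive of the ``$\Leftarrow$'' half ($f(p) = f(q) \Rightarrow p(e) = q(e)$) of the definition. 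I expect no real obstacle; the main thing to be careful about is invoking the correct half of the well-behavedness hypothesis and noting that in part (1) it is not needed at all.
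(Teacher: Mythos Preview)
Your proposal is correct and follows essentially the same argument as the paper's proof: in both parts you use the given pair $(z,w)$ or $(p,q)$ directly as the witnessing pair, deduce $z(e)\neq w(e)$ (resp.\ $p(e)\neq q(e)$) from $\Res_e$-equality plus distinctness, and then conclude $\mu$-values differ via the definition of $\mathcal{P}^c$ (resp.\ the well-behaved condition on $f$). Your remark about which half of the well-behavedness biconditional is invoked is a nice extra bit of care not spelled out in the paper.
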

\begin{proof}
For part (1), observe that $z(e) \neq w(e)$ because $z \neq w$ and $\Res_e(z) =\Res_e(w)$. Since $z, w \in \mathcal{P}^c$, we have
\[ \mu(z) = z(e) \neq w(e) = \mu(w).  \]
It follows that $e$ is essential for $\mu$. 

For part (2), we also have $p(e) \neq q(e)$. As $f$ is well-behaved, then $f(p) \neq f(q)$, so
\[ \mu(p) = f(p) \neq f(q) = \mu(q). \]
This shows that $e$ is essential for $\mu$.
\end{proof}

\begin{corollary}\label{cor-A3}
Suppose that $(\mathcal{P},f)$ generates a local map $\mu : A^S \to A$, with $f$ well-behaved, $\vert S \vert \geq 2$ and $\vert A \vert \geq 3$. Then $e$ is essential for $\mu$. 
\end{corollary}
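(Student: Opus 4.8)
The plan is to use Lemma \ref{le-3}(2) together with a counting argument. Since $f$ is well-behaved and $|A| \geq 3$, I want to exhibit two distinct patterns $p, q \in \mathcal{P}$ with $\Res_e(p) = \Res_e(q)$, or alternatively two distinct $z, w \in \mathcal{P}^c$ with $\Res_e(z) = \Res_e(w)$, and then invoke Lemma \ref{le-3}. The key observation is that $A^S$ is partitioned into the fibers $\Res_e^{-1}(y)$ for $y \in A^{S \setminus \{e\}}$, and each such fiber has exactly $|A| \geq 3$ elements. If $e$ were \emph{not} essential, then by (the contrapositive of) Lemma \ref{le-3}(1) and (2), each fiber $\Res_e^{-1}(y)$ would contain at most one element of $\mathcal{P}$ and at most one element of $\mathcal{P}^c$, hence at most two elements total. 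But each fiber has $|A| \geq 3$ elements, a contradiction.

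Concretely, I would argue as follows. Suppose for contradiction that $e$ is not essential for $\mu$. Fix any $y \in A^{S \setminus \{e\}}$ (this set is nonempty since $\vert S \vert \geq 2$ guarantees $S \setminus \{e\} \neq \emptyset$, though in fact even $S \setminus \{e\} = \emptyset$ would give $|A^{S\setminus\{e\}}| = 1$, and the fiber is all of $A^S$; the argument still works). The fiber $F := \Res_e^{-1}(y)$ has $|F| = |A| \geq 3$. Partition $F = (F \cap \mathcal{P}) \sqcup (F \cap \mathcal{P}^c)$. By the contrapositive of Lemma \ref{le-3}(1), $|F \cap \mathcal{P}^c| \leq 1$, since two distinct elements of $F \cap \mathcal{P}^c$ would make $e$ essential. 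By the contrapositive of Lemma \ref{le-3}(2) (using that $f$ is well-behaved), $|F \cap \mathcal{P}| \leq 1$. Hence $|F| \leq 2$, contradicting $|F| = |A| \geq 3$. Therefore $e$ is essential for $\mu$.

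I do not anticipate any serious obstacle here; the statement is essentially a corollary packaging the two parts of Lemma \ref{le-3} with the trivial fact that $\Res_e$ has fibers of size $|A|$. The only minor care needed is to make sure the hypotheses line up: Lemma \ref{le-3}(2) requires $f$ well-behaved, which is assumed; and one should note that the bound $|A| \geq 3$ is exactly what is needed to force a fiber to contain either two patterns from $\mathcal{P}$ or two from $\mathcal{P}^c$. (When $|A| = 2$ this fails, which is consistent with Example \ref{ex-ECA}, where ECA $0$ and $255$ have empty minimal memory set.) I would present this as a short direct proof contradicting the non-essentiality of $e$, citing Lemma \ref{le-3} for both sub-bounds.
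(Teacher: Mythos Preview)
Your proof is correct and follows essentially the same argument as the paper: both use that each fiber $\Res_e^{-1}(y)$ has size $\vert A\vert \geq 3$, so by pigeonhole it must contain either two elements of $\mathcal{P}$ or two elements of $\mathcal{P}^c$, whence Lemma~\ref{le-3} applies. The only cosmetic difference is that you phrase it as a contradiction via the contrapositives of Lemma~\ref{le-3}(1) and (2), whereas the paper states the pigeonhole conclusion directly.
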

\begin{proof}
First, $A^{S \setminus \{e\}} \neq \emptyset$ because $\vert S \vert \geq 2$. For any $y \in A^{S \setminus \{e\}}$, the set $\Res^{-1}_e(y)$ has size $\vert A \vert \geq 3$. Hence, we must have that either $\vert \Res^{-1}_e(y) \cap \mathcal{P} \vert \geq 2$ or $\vert \Res^{-1}_e(y) \cap \mathcal{P}^c \vert \geq 2$. It follows from Lemma \ref{le-3} that $e$ is essential for $\mu$. 
\end{proof}

\begin{corollary}\label{cor-e}
Suppose that $(\mathcal{P},f)$ generates a local map $\mu : A^S \to A$, and that $A = \{0,1\}$. Then, $e \in S$ is essential for $\mu$ if and only if there exist $z, w \in \mathcal{P}^c$, or $z, w \in \mathcal{P}$, such that $z \neq w$ and $\Res_e(z) =\Res_e(w)$. 
\end{corollary}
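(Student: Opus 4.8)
The plan is to prove both implications directly, using throughout that over the binary alphabet $A = \{0,1\}$ the function $f$ is forced to be $f(p) = p(e)^c$ (Remark \ref{remark1}), and in particular is automatically well-behaved. This means no genuinely new machinery is needed: the forward implication will be an immediate application of Lemma \ref{le-3}, and the converse will be a short case analysis on a single $\Res_e$-fibre.

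For the forward implication, I would assume such a pair $z, w$ exists and split according to which part of the partition $A^S = \mathcal{P} \sqcup \mathcal{P}^c$ it lies in. If $z, w \in \mathcal{P}^c$ with $z \neq w$ and $\Res_e(z) = \Res_e(w)$, this is precisely the hypothesis of Lemma \ref{le-3}(1), so $e$ is essential for $\mu$. If instead $z, w \in \mathcal{P}$, then since $f$ is well-behaved for $A = \{0,1\}$, the hypothesis of Lemma \ref{le-3}(2) is met, and again $e$ is essential. So this direction requires no further work.

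For the converse, I would suppose $e$ is essential, fix witnesses $z, w \in A^S$ with $\Res_e(z) = \Res_e(w)$ and $\mu(z) \neq \mu(w)$, and first note $z \neq w$ (otherwise the images would coincide) and hence, since $z$ and $w$ agree off $e$ and $\vert A \vert = 2$, that $z(e) \neq w(e)$; say $z(e) = 0$ and $w(e) = 1$. The claim to establish is that $z$ and $w$ lie on the same side of the partition. Suppose not; without loss of generality $z \in \mathcal{P}$ and $w \in \mathcal{P}^c$ (the other mixed case being symmetric under swapping $0$ and $1$). Then $\mu(z) = f(z) = z(e)^c = 1$ while $\mu(w) = w(e) = 1$, so $\mu(z) = \mu(w)$, contradicting the choice of $z, w$. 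Hence both $z, w \in \mathcal{P}$ or both $z, w \in \mathcal{P}^c$, and together with $z \neq w$ and $\Res_e(z) = \Res_e(w)$ this is exactly the asserted conclusion.

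There is no real obstacle; the only point deserving a moment's care is the mixed case in the converse, where one uses that whenever a $\Res_e$-fibre $\{z, w\}$ meets both $\mathcal{P}$ and $\mathcal{P}^c$ the two values $\mu(z)$ and $\mu(w)$ are forced to agree (both equal the common value $z(e)^c = w(e)$ or $z(e) = w(e)^c$, depending on which of $z, w$ lies in $\mathcal{P}$), whereas if the whole fibre lies in $\mathcal{P}$ or the whole fibre lies in $\mathcal{P}^c$ the two values differ. Thus, in the binary case, $e$ is essential precisely when some $\Res_e$-fibre is entirely contained in $\mathcal{P}$ or entirely contained in $\mathcal{P}^c$, which is the content of the corollary.
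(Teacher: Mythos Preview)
Your proof is correct and follows essentially the same route as the paper's: one direction is Lemma~\ref{le-3}, and for the other you take witnesses $z,w$ with $\Res_e(z)=\Res_e(w)$, $\mu(z)\neq\mu(w)$, and rule out the mixed case $z\in\mathcal{P}$, $w\in\mathcal{P}^c$ by computing $\mu(z)=z(e)^c=w(e)=\mu(w)$. The only quibble is that you have swapped the labels ``forward'' and ``converse'' relative to the statement as written (in ``$e$ is essential if and only if \dots'' the forward direction starts from $e$ being essential), but the mathematics is unaffected.
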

\begin{proof}
The converse implication follows by Lemma \ref{le-3}. Suppose that $e$ is essential for $\mu$. By definition, there exist $z, w \in A^S$ such that $\Res_e(z) = \Res_e(w)$ and $\mu(z) \neq \mu(w)$. Note that $z(e) \neq w(e)$ (as otherwise, $z=w$). We will show that we must have that either $z, w \in \mathcal{P}^c$ or $z, w \in \mathcal{P}$. For a contradiction, suppose that $z \in \mathcal{P}$ and $w \in \mathcal{P}^c$. Then, $\mu(w) = w(e)$ and, by Remark \ref{remark1}, we have $\mu(z) = f(z) = z(e)^c$. However, since $z(e) \neq w(e)$ and $A=\{0,1\}$, we must have that $w(e) = z(e)^c$. This contradicts that $\mu(z) \neq \mu(w)$. 
\end{proof}

\begin{proposition}\label{le-complement}
Let $\mathcal{P} \subseteq A^S$. Let $f : \mathcal{P} \to A$ and $g : \mathcal{P}^c \to A$ be two well-behaved functions. Suppose that $(\mathcal{P},f)$ generates $\mu : A^S \to A$ and that $(\mathcal{P}^c, g)$ generates $\mu^\prime : A^S \to A$. 
\begin{enumerate}
\item $s \in S \setminus \{e\}$ is essential for $\mu$ if and only if $s \in S \setminus \{e\}$ is essential for $\mu^\prime$. 
\item If $A=\{0,1\}$, then $\mms(\mu) = \mms(\mu^\prime)$.
\end{enumerate}
\end{proposition}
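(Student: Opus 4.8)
The plan is to deduce both statements from the symmetry of the characterizations already obtained in Lemma \ref{le-1} and Corollary \ref{cor-e}, noting throughout that $(\mathcal{P}^c)^c = \mathcal{P}$.

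For part (1), I would apply Lemma \ref{le-1} twice. Since $f$ is well-behaved and $(\mathcal{P}, f)$ generates $\mu$, Lemma \ref{le-1} gives that $s \in S \setminus \{e\}$ is essential for $\mu$ if and only if there exist $p \in \mathcal{P}$ and $z \in \mathcal{P}^c$ with $\Res_s(p) = \Res_s(z)$. Similarly, since $g$ is well-behaved and $(\mathcal{P}^c, g)$ generates $\mu^\prime$, the same lemma (with the role of $\mathcal{P}$ played by $\mathcal{P}^c$, so that the complementary class is $(\mathcal{P}^c)^c = \mathcal{P}$) gives that $s$ is essential for $\mu^\prime$ if and only if there exist $p \in \mathcal{P}^c$ and $z \in \mathcal{P}$ with $\Res_s(p) = \Res_s(z)$. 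These two existence statements are identical up to renaming the witnesses, so $s$ is essential for $\mu$ if and only if it is essential for $\mu^\prime$.

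For part (2), recall that when $A = \{0,1\}$ the relevant functions to $A$ are automatically well-behaved, so the hypotheses of part (1) are satisfied; moreover, by Proposition \ref{le-mms}, $\mms(\mu)$ and $\mms(\mu^\prime)$ are precisely the sets of essential elements of $\mu$ and $\mu^\prime$. Part (1) already shows these sets agree on $S \setminus \{e\}$, so it only remains to check that $e$ is essential for $\mu$ if and only if $e$ is essential for $\mu^\prime$. Here I would invoke Corollary \ref{cor-e}: $e$ is essential for $\mu$ if and only if there exist distinct $z, w$ lying both in $\mathcal{P}$ or both in $\mathcal{P}^c$ with $\Res_e(z) = \Res_e(w)$, and the analogous statement for $\mu^\prime$ uses $\mathcal{P}^c$ together with $(\mathcal{P}^c)^c = \mathcal{P}$ in the very same "both in one class" condition. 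Since that condition is manifestly symmetric under interchanging $\mathcal{P}$ and $\mathcal{P}^c$, the two are equivalent, and combining with part (1) yields $\mms(\mu) = \mms(\mu^\prime)$.

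There is no serious obstacle here: the whole argument rests on observing that the conditions appearing in Lemma \ref{le-1} and Corollary \ref{cor-e} are invariant under swapping a pattern set with its complement. The only point requiring a little care is the bookkeeping when applying these results to $\mu^\prime$, namely keeping straight that the distinguished pattern set for $\mu^\prime$ is $\mathcal{P}^c$ and that its complement is $\mathcal{P}$ again.
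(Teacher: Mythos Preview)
Your proof is correct and follows essentially the same approach as the paper: part (1) comes from the symmetric ``if and only if'' in Lemma \ref{le-1} (valid because both $f$ and $g$ are well-behaved), and part (2) then follows from Corollary \ref{cor-e} handling the case $s=e$. Your write-up is simply more explicit about the bookkeeping, in particular the appeal to Proposition \ref{le-mms} and the observation that well-behavedness is automatic when $A=\{0,1\}$.
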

\begin{proof}
For part (1), we use Lemma \ref{le-1}. It follows that $s \in S \setminus \{e\}$ is essential for $\mu$ if and only if there is $p \in \mathcal{P}$ and $z \in \mathcal{P}^c$ such that $\Res_s(p) = \Res_s(z)$, which holds if and only if $s$ is essential for $\mu^\prime$. Part (2) follows by Corollary \ref{cor-e}.
\end{proof}

\begin{remark}
When $A=\{0,1\}$, the local map $\mu^\prime : A^S \to A$ is different from what is known in the literature as the \emph{complementary rule}, which is induced by the group-theoretic conjugation by the invertible cellular automaton that exchange $0$'s and $1$'s (ECA 51). For example, the complementary rule of the ECA 110 is the ECA 137; however, the cellular automaton generated by the complementary patterns of the ones that generate ECA 110 is the ECA 145 (which is the result of only composing on one side by ECA 51).
\end{remark}

\begin{lemma}\label{cor-1}
Suppose that $(\mathcal{P},f)$ generates a local map $\mu : A^S \to A$, with $\vert S \vert \geq 2$. If $e$ is not essential for $\mu$, then 
\[ \vert \mathcal{P} \vert = \vert A \vert^{\vert S \vert} - \vert A \vert^{\vert S \vert-1}.  \]
\end{lemma}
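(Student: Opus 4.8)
The plan is to analyze the fibers of the restriction map $\Res_e : A^S \to A^{S \setminus \{e\}}$ and to count, fiber by fiber, how many elements lie in $\mathcal{P}$. The key observation is that, since $e$ is not essential for $\mu$, the value $\mu(z)$ depends only on $\Res_e(z)$: concretely, for each $y \in A^{S \setminus \{e\}}$ there is a well-defined constant $c(y) \in A$ such that $\mu(z) = c(y)$ for every $z$ in the fiber $\Res_e^{-1}(y)$. This $c(y)$ is genuinely well-defined precisely because the hypothesis says that any two $z,w \in A^S$ with $\Res_e(z) = \Res_e(w)$ satisfy $\mu(z) = \mu(w)$.

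First I would fix an arbitrary $y \in A^{S \setminus \{e\}}$; this set is nonempty because $\vert S \vert \geq 2$. For each $a \in A$ let $z_a$ denote the unique element of $\Res_e^{-1}(y)$ with $z_a(e) = a$. The $z_a$ are pairwise distinct and exhaust the fiber, so $\vert \Res_e^{-1}(y) \vert = \vert A \vert$. By the definition of $\mathcal{P}$ we have $z_a \in \mathcal{P}$ if and only if $\mu(z_a) \neq z_a(e)$, i.e. if and only if $c(y) \neq a$. Hence exactly one element of the fiber, namely $z_{c(y)}$, lies in $\mathcal{P}^c$, and the remaining $\vert A \vert - 1$ elements lie in $\mathcal{P}$.

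It then remains to sum over all fibers. The collection $\{\Res_e^{-1}(y) : y \in A^{S \setminus \{e\}}\}$ partitions $A^S$ into $\vert A^{S \setminus \{e\}} \vert = \vert A \vert^{\vert S \vert - 1}$ blocks, and by the previous step each block contributes exactly $\vert A \vert - 1$ elements to $\mathcal{P}$. Therefore $\vert \mathcal{P} \vert = (\vert A \vert - 1)\,\vert A \vert^{\vert S \vert - 1} = \vert A \vert^{\vert S \vert} - \vert A \vert^{\vert S \vert - 1}$, which is the desired conclusion.

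I do not anticipate a real obstacle in this argument; it is essentially a counting identity. The only two points requiring a moment of care are the well-definedness of $c(y)$ (which is exactly the non-essentiality hypothesis on $e$, so nothing extra is needed) and the use of $\vert S \vert \geq 2$ to ensure $A^{S \setminus \{e\}} \neq \emptyset$, so that the fiber-wise count is not vacuous. One could also phrase the whole thing without introducing $c(y)$, by saying that $\mu$ factors through $\Res_e$ and applying the count to that factorization, but introducing $c(y)$ keeps the bookkeeping transparent.
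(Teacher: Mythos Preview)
Your proof is correct and follows essentially the same approach as the paper: both arguments use that non-essentiality of $e$ forces $\mu$ to be constant on each fiber $\Res_e^{-1}(y)$, and then identify the unique element of $\mathcal{P}^c$ in that fiber as the one whose value at $e$ equals the constant. Your presentation is slightly more streamlined, doing a direct fiber-by-fiber count, whereas the paper splits the equality into two inequalities (invoking Lemma~\ref{le-3}(1) for one direction to get injectivity of $\Res_e\vert_{\mathcal{P}^c}$), but the underlying idea is the same.
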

\begin{proof}
As $e$ is not essential for $\mu$, by Lemma \ref{le-3} (1) we have that all $z,w \in \mathcal{P}^c$, $z \neq w$, satisfy $\Res_e(z) \neq \Res_e(w)$. This means that $\Res_e : \mathcal{P}^c \to  A^{S \setminus \{ e\}}$ is an injective function, with $A^{S \setminus \{e\}} \neq \emptyset$ because $\vert S \vert \geq 2$. Hence $\vert \mathcal{P}^c \vert \leq \vert A \vert^{\vert S \vert-1}$, which is equivalent to $\vert \mathcal{P} \vert \geq \vert A \vert^{\vert S \vert} - \vert A \vert^{\vert S \vert-1}$. 

On the other hand, that $e$ is not essential for $\mu$, implies that $\mu$ is constant on $\Res_e^{-1}(y)$ for all $y \in A^{S \setminus \{e\}}$. Hence, for all $y \in A^{S \setminus \{e\}}$, there exists a unique $\hat{y} \in \mathcal{P}^c$ such that $\hat{y} \in \Res_e^{-1}(y)$ (namely, if $\mu(w) = a \in A$ for all $w \in \Res_e^{-1}(y)$, let $\hat{y} \in \Res_e^{-1}(y)$ be such that $\hat{y}(e) = a$; it follows by the definition of $\mathcal{P}$ that $\mathcal{P}^c \cap \Res_e^{-1}(y) = \{ \hat{y}  \} $). This implies that there is an injective function $A^{S \setminus \{e\}} \to \mathcal{P}^c$ given by $y \mapsto \hat{y}$, so $\vert A \vert^{\vert S \vert - 1} \leq \vert \mathcal{P}^c \vert$. Therefore,  $\vert \mathcal{P} \vert \leq \vert A \vert^{\vert S \vert} - \vert A \vert^{\vert S \vert-1}$, and the result follows.  
\end{proof}

\begin{corollary}\label{cor-2}
Suppose that $(\mathcal{P},f)$ generates a local map $\mu : A^S \to A$, where $\vert S \vert \geq 2$ and $\vert \mathcal{P} \vert$ is not a multiple of $\vert A \vert$. Then, 
\[ \mms(\tau) = S. \]  
\end{corollary}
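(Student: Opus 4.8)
The plan is to combine Proposition~\ref{le-2}, which already handles all elements of $S \setminus \{e\}$, with Lemma~\ref{cor-1}, which handles the element $e$. First I would invoke Proposition~\ref{le-2}: since $(\mathcal{P}, f)$ generates $\mu$ and $\vert \mathcal{P} \vert$ is not a multiple of $\vert A \vert$, every $s \in S \setminus \{e\}$ is essential for $\mu$. Next I would argue that $e$ itself is essential: if $e$ were \emph{not} essential for $\mu$, then Lemma~\ref{cor-1} (which applies because $\vert S \vert \geq 2$) would force $\vert \mathcal{P} \vert = \vert A \vert^{\vert S \vert} - \vert A \vert^{\vert S \vert - 1} = \vert A \vert^{\vert S \vert - 1}(\vert A \vert - 1)$, and in particular $\vert \mathcal{P} \vert$ would be a multiple of $\vert A \vert$ (indeed of $\vert A \vert^{\vert S \vert - 1}$, and $\vert S \vert - 1 \geq 1$). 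This contradicts the hypothesis that $\vert \mathcal{P} \vert$ is not a multiple of $\vert A \vert$, so $e$ must be essential for $\mu$.

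Having shown every element of $S$ — both $e$ and the elements of $S \setminus \{e\}$ — is essential for $\mu$, I would conclude by Proposition~\ref{le-mms}, which identifies $\mms(\mu)$ with the set of essential elements, that $\mms(\mu) = S$. (Here I would also note the harmless typo in the statement: $\tau$ denotes the cellular automaton defined by $\mu$, so $\mms(\tau) = \mms(\mu) = S$.)

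There is no real obstacle here: the corollary is an immediate assembly of two results proved just above it. The only point requiring a word of care is the divisibility bookkeeping in the second step — namely, verifying that $\vert A \vert^{\vert S \vert} - \vert A \vert^{\vert S \vert - 1}$ is genuinely a multiple of $\vert A \vert$ when $\vert S \vert \geq 2$; this is clear since it equals $\vert A \vert \cdot \bigl(\vert A \vert^{\vert S \vert - 1} - \vert A \vert^{\vert S \vert - 2}\bigr)$ and the parenthesised factor is a nonnegative integer because $\vert S \vert - 1 \geq 1$. With that observation in hand, the proof is three sentences long.
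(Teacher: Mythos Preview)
Your proof is correct and follows essentially the same approach as the paper: invoke Proposition~\ref{le-2} for the elements of $S \setminus \{e\}$, then use (the contrapositive of) Lemma~\ref{cor-1} together with the divisibility observation to handle $e$. The paper's proof is terser (it omits the explicit appeal to Proposition~\ref{le-mms} and the divisibility check), but the logic is identical.
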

\begin{proof}
By Proposition \ref{le-2}, every $s \in S \setminus \{e\}$ is essential for $\mu$. It follows by Lemma \ref{cor-1} that $e \in S$ is also essential for $\mu$, because $\vert \mathcal{P} \vert \neq \vert A \vert^{\vert S \vert} - \vert A \vert^{\vert S \vert-1}$. 
\end{proof}

The following result is Lemma 1 in \cite{CAIdem}, which is an immediate consequence of Corollary \ref{cor-2}.

\begin{corollary}\label{le-one-p}
Let $S \subseteq G$ be a finite subset such that $e \in S$ and $\vert S \vert \geq 2$. Let $\mu : A^S \to A$ be a local map generated by $(\mathcal{P}, f)$ with $\vert \mathcal{P} \vert = 1$. Then, 
\[ \mms(\mu) = S. \] 
\end{corollary}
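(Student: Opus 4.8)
The plan is to invoke Corollary \ref{cor-2} directly, so the only thing to check is that its hypotheses are satisfied. We are given $\vert S \vert \geq 2$, and the generating pair $(\mathcal{P}, f)$ has $\vert \mathcal{P} \vert = 1$. Since we have assumed throughout that $\vert A \vert \geq 2$, the integer $1$ cannot be a multiple of $\vert A \vert$: the nonnegative multiples of $\vert A \vert$ are $0, \vert A \vert, 2\vert A \vert, \dots$, none of which equals $1$ when $\vert A \vert \geq 2$. Hence $\vert \mathcal{P} \vert$ is not a multiple of $\vert A \vert$, and Corollary \ref{cor-2} applies verbatim to give $\mms(\mu) = S$.

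There is essentially no obstacle here; the statement is recorded separately only because it recovers \cite[Lemma 1]{CAIdem}, and the single substantive point is the elementary number-theoretic observation that $1$ is not a multiple of any integer $\geq 2$. One could alternatively reprove it from scratch without Corollary \ref{cor-2}: writing $\mathcal{P} = \{p\}$, for any $s \in S \setminus \{e\}$ one exhibits $z \in \mathcal{P}^c$ with $\Res_s(z) = \Res_s(p)$ — such $z$ exists because $\Res_s^{-1}(\Res_s(p))$ has $\vert A \vert \geq 2$ elements, so it contains a pattern other than $p$, which then lies in $\mathcal{P}^c$ — and then Lemma \ref{le-1}(1) shows $s$ is essential; finally, for $e$ itself one notes that $\vert \mathcal{P}^c \vert = \vert A \vert^{\vert S \vert} - 1 \neq \vert A \vert^{\vert S \vert} - \vert A \vert^{\vert S \vert - 1}$ (again using $\vert A \vert \geq 2$ and $\vert S \vert \geq 2$), so Lemma \ref{cor-1} forces $e$ to be essential as well. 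Either way, every element of $S$ is essential, and Proposition \ref{le-mms} yields $\mms(\mu) = S$. I would present the short route via Corollary \ref{cor-2}, since that is the cleanest and matches the paper's stated claim that the result is "an immediate consequence."
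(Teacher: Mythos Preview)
Your proposal is correct and matches the paper's approach exactly: the paper states that this corollary is ``an immediate consequence of Corollary \ref{cor-2},'' and your primary argument is precisely that invocation, justified by the observation that $1$ is not a multiple of $\vert A \vert \geq 2$. The alternative direct argument you sketch is also fine but unnecessary here.
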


\begin{proposition}\label{prop-last}
Let $S \subseteq G$ be a finite subset such that $e \in S$ and $\vert S \vert \geq 3$. Suppose that $(\mathcal{P},f)$ generates a local map $\mu : A^S \to A$, with $f$ well-behaved and $\vert \mathcal{P} \vert = \vert A \vert$. Then, $\mms(\mu) = S \setminus \{s\}$, for some $s \in S \setminus \{e\}$, or $\mms(\mu) = S$.   
\end{proposition}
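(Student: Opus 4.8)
The plan is to determine the set of essential elements of $\mu$ and then invoke Proposition \ref{le-mms}. Concretely, I will establish two things: (i) $e$ is essential for $\mu$; and (ii) at most one element of $S \setminus \{e\}$ fails to be essential for $\mu$. Granting (i) and (ii), $\mms(\mu)$ contains $e$ and omits at most one element of $S \setminus \{e\}$, so $\mms(\mu) = S$ or $\mms(\mu) = S \setminus \{s\}$ for some $s \in S \setminus \{e\}$, which is the claim.

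For (i), I would apply Lemma \ref{cor-1} in contrapositive form: if $e$ were not essential for $\mu$, then $\vert \mathcal{P} \vert = \vert A \vert^{\vert S \vert} - \vert A \vert^{\vert S \vert - 1} = \vert A \vert^{\vert S \vert - 1}(\vert A \vert - 1)$. Since $\vert S \vert \geq 3$ and $\vert A \vert \geq 2$, the right-hand side is at least $\vert A \vert^{2} > \vert A \vert = \vert \mathcal{P} \vert$, a contradiction. Hence $e$ is essential. (This is just part (1) of Theorem 1 specialized to $\vert \mathcal{P} \vert = \vert A \vert$ with $\vert S \vert \geq 3$.)

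For (ii), suppose toward a contradiction that there are distinct $s, t \in S \setminus \{e\}$ both non-essential for $\mu$. By the contrapositive of Lemma \ref{le-1}(1), there is no pair $p \in \mathcal{P}$, $z \in \mathcal{P}^c$ with $\Res_s(p) = \Res_s(z)$; equivalently, every fiber of $\Res_s : A^S \to A^{S \setminus \{s\}}$ lies entirely inside $\mathcal{P}$ or entirely inside $\mathcal{P}^c$, so $\mathcal{P}$ is a disjoint union of $\Res_s$-fibers. Each such fiber has exactly $\vert A \vert$ elements and $\vert \mathcal{P} \vert = \vert A \vert$, so $\mathcal{P}$ consists of a single fiber, $\mathcal{P} = \Res_s^{-1}(y_s)$ for some $y_s \in A^{S \setminus \{s\}}$; in particular any two patterns in $\mathcal{P}$ agree on every coordinate of $S \setminus \{s\}$. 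The same argument applied to $t$ shows they also agree on every coordinate of $S \setminus \{t\}$. Since $s \neq t$, $(S \setminus \{s\}) \cup (S \setminus \{t\}) = S$, so all patterns in $\mathcal{P}$ coincide, forcing $\vert \mathcal{P} \vert = 1$, which contradicts $\vert \mathcal{P} \vert = \vert A \vert \geq 2$. This proves (ii), and together with (i) and Proposition \ref{le-mms} the proposition follows. I do not expect a real obstacle here: (i) is a short arithmetic comparison, and the only delicate point in (ii) is observing that a union of $\Res_s$-fibers of total cardinality exactly $\vert A \vert$ must be one fiber, after which the contradiction is immediate; incidentally, well-behavedness of $f$, though assumed, is not actually needed for this argument, as it enters only through Lemma \ref{le-1}(1), whose proof does not use it.
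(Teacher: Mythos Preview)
Your proof is correct, but it follows a different route from the paper's. The paper argues by a case split: either some $s \in S \setminus \{e\}$ has all patterns of $\mathcal{P}$ sharing the same $\Res_s$-value (Case~1), in which case it invokes Lemma~\ref{le-1}(2) (using well-behavedness) to see $s$ is non-essential, then passes to the quotient local map $\mu'$ on $S \setminus \{s\}$, which is generated by a \emph{single} pattern, and applies Corollary~\ref{le-one-p}; or no such $s$ exists (Case~2), and then for each $s$ one finds a mixed $\Res_s$-fiber and concludes via Lemma~\ref{le-1}(1) and Lemma~\ref{cor-1} that $\mms(\mu)=S$.

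Your argument bypasses this dichotomy entirely: you show $e$ is essential (same appeal to Lemma~\ref{cor-1}), and then prove directly that two distinct non-essential elements $s,t \in S \setminus \{e\}$ would force $\mathcal{P}$ to be simultaneously a single $\Res_s$-fiber and a single $\Res_t$-fiber, collapsing $\mathcal{P}$ to a singleton. This is shorter, avoids the reduction to Corollary~\ref{le-one-p}, and---as you correctly observe---never uses well-behavedness of $f$, since only Lemma~\ref{le-1}(1) and Lemma~\ref{cor-1} are invoked. So your approach actually yields a slightly stronger statement than the proposition as written. The paper's route, on the other hand, gives a little extra structural information in Case~1: it identifies the reduced local map explicitly as one generated by a single pattern.
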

\begin{proof}
We divide the proof in two cases.
\begin{itemize}
\item \textbf{Case 1:} There exists $s \in S \setminus \{e\}$ such that for all $p, q \in \mathcal{P}$ we have $\Res_s(p) = \Res_s(q)$. Since $\vert \mathcal{P} \vert = \vert A \vert$, there is no $z \in \mathcal{P}^c$ such that $\Res_s(z) = \Res_s(p)$ for some $p \in \mathcal{P}$. By Lemma \ref{le-1} (2), $s$ is not essential for $\mu$. Now we can consider $\mu^\prime : A^{S \setminus \{s\}} \to A$ defined by $\mu^\prime(z) := \mu(\hat{z})$, for all $z \in A^{S \setminus \{s\}}$, where $\hat{z} \in A^S$ is any extension of $z$ (this is well-defined because $s$ is not essential for $\mu$). Then $\mu^\prime$ is generated by $(\{ \Res_s(p) \}, f^\prime)$, for any $p \in \mathcal{P}$, where $f^\prime(\Res_s(p) ) = f(p)$. Since $\mu \sim \mu^\prime$, and $\mu^\prime$ is generated by a single pattern over $A^{S \setminus \{s\}}$ with $\vert S \setminus \{s\}\vert \geq 2$, it follows from Corollary \ref{le-one-p} that
\[\mms(\mu) = \mms(\mu^\prime) = S \setminus \{s\}. \]

\item \textbf{Case 2:} For all $s \in S \setminus \{e\}$ there exist $p, q \in \mathcal{P}$ such that $\Res_s(p) \neq \Res_s(q)$. Observe that $\Res_s^{-1}(\Res_s(p))$ has size $\vert A \vert = \vert \mathcal{P} \vert$ and $q \in \mathcal{P} \cap \Res_s^{-1}(\Res_s(p))^c$, so it is not possible that $\Res_s^{-1}(\Res_s(p)) = \mathcal{P}$. Therefore, there exists $z \in \mathcal{P}^c$ such that $z \in \Res_s^{-1}(\Res_s(p))$, which means that $\Res_s(z) = \Res_s(p)$. It follows from	 Lemma \ref{le-1} (1) that $s$ is essential for $\mu$. Now, since $\vert S \vert \geq 3$, then $\vert \mathcal{P} \vert = \vert A \vert < \vert A \vert^{\vert S \vert} - \vert A \vert^{\vert S \vert-1}$. Hence, it follows from Lemma \ref{cor-1} that $e \in S$ is essential for $\mu$. 
\end{itemize} 
\end{proof}

The proof of Theorem 1 follows from Lemma \ref{cor-1}, Corollary \ref{cor-2}, and Proposition \ref{prop-last}.


\section*{Acknowledgments}

The second author was supported by CONAHCYT \emph{Becas nacionales para estudios de posgrado}, Government of Mexico. We sincerely thank all the comments and suggestions made by the anonymous reviewers of this paper, especially for fixing Case 2 in the proof of Proposition \ref{prop-last}.


\end{document}